\title{\LARGE \bf
Stochastic Model Predictive Control with\\ Discounted Probabilistic Constraints
}
\author{Shuhao Yan, Paul Goulart and Mark Cannon
\thanks{The authors are with the Department of Engineering Science, University
of Oxford, OX1 3PJ, UK (E-mail: shuhao.yan@eng.ox.ac.uk; paul.goulart@eng.ox.ac.uk; mark.cannon@eng.ox.ac.uk)
        }
}
\newcommand{\Ex}[2][]{\mathbb{E}_{#1}\left[#2\right]} 
\newcommand{\norm}[1]{\left\lVert#1\right\rVert}
\newtheorem{lemma}{Lemma}
\newtheorem{theorem}[lemma]{Theorem}
\newtheorem{algorithm}{Algorithm}
\newtheorem{assumption}{Assumption}
\DeclareMathOperator*{\minimise}{minimise}
\DeclareMathOperator{\tr}{tr}
\begin{document}

\maketitle
\thispagestyle{empty}
\pagestyle{empty}

\begin{abstract}
  This paper considers linear discrete-time systems with additive disturbances, and designs a Model Predictive Control (MPC) law to minimise a quadratic cost function subject to a chance constraint. The chance constraint is defined as a discounted sum of violation probabilities on an infinite horizon.
  By penalising violation probabilities close to the initial time and ignoring violation probabilities in the far future, this form of constraint enables the feasibility of the online optimisation to be guaranteed without an assumption of boundedness of the disturbance.
  A computationally convenient MPC optimisation problem is formulated using Chebyshev's inequality and we introduce an online constraint-tightening technique to ensure recursive feasibility based on knowledge of a suboptimal solution.
  The closed loop system is guaranteed to satisfy the chance constraint and a quadratic stability condition.

\end{abstract}

\section{INTRODUCTION}

Robust control design methods for systems with unknown disturbances must take into account the worst case disturbance bounds in order to guarantee satisfaction of hard constraints on system states and control inputs~\cite{blanchini91,kothare96,mayne05}. However, for problems with stochastic disturbances and constraints that are allowed to be violated up to a specified probability, worst-case control strategies can be unnecessarily conservative. This motivated the development of stochastic Model Predictive Control (MPC), which addresses optimal control problems for systems with chance constraints by making use of information on the distribution of model uncertainty~\cite{schwarm99,KOUVARITAKIS2010auto}.
Although capable of handling chance constraints, existing stochastic MPC algorithms that ensure constraint satisfaction in closed loop operation typically rely on knowledge of worst case disturbance bounds to obtain such guarantees~\cite{kouvaritakis2015mpcbook}.  For the algorithms proposed in~\cite{korda14,Lorenzen17ieeetr,fleming17} for example, which simultaneously ensure closed loop constraint satisfaction and recursive feasibility of the online MPC optimisation, the degree of conservativeness increases as the disturbance bounds become more conservative.

This paper ensures both closed loop satisfaction of chance constraints and recursive feasibility but does not rely on disturbance bounds, instead requiring knowledge of only the first and second moments of the disturbance input. This is achieved by formulating the chance constraint as the sum over an infinite horizon of discounted violation probabilities, and implementing the resulting constraints using Chebyshev's inequality. Control problems involving discounted costs and constraints are common in financial engineering applications (e.g.~\cite{FRANKEL2016,kouvaritakis06,kamgarpour17}), and allow system performance in the near future to be prioritised over long-term behaviour. This shift of emphasis is vital for ensuring recursive feasibility of chance-constrained control problems involving possibly unbounded disturbances.
We describe an online constraint-tightening approach that guarantees the feasibility of the MPC optimisation, and, by considering the closed loop dynamics of the tightening parameters, we show that the closed loop system satisfies the discounted chance constraint as initially specified.




The paper is organised as follows.
The control problem is described and reformulated with a finite prediction horizon in Section~\ref{section:problem
  description}.
Section~\ref{section:recursive feasibility} proposes an online constraint-tightening method for guaranteeing recursive feasibility.
Section~\ref{section: smpc algorithm} summarises the proposed MPC algorithm and derives bounds on closed loop performance.
In Section~\ref{section: epsilon sequence}, the closed loop behaviour of the tightening parameters is analysed and constraint satisfaction is proved.
Section~\ref{section:numerical example} gives a numerical example illustrating the results obtained and the paper is concluded in Section~\ref{section:conclusion}.

\textit{Notation}:
%
The Euclidean norm is denoted $\| x\|$ and we define $\norm{x}^2_{Q} := x^TQx$.
The notation $Q\succcurlyeq0$ and $R\succ0$ indicates that $Q$ and $R$ are respectively positive semidefinite and positive definite matrices, and $\tr(Q)$ denotes the trace of $Q$.
The probability of an event $A$ is denoted $\mathbb{P}(A)$. The expectation of $x$ given information available at time $k$ is denoted $\Ex[k]{x}$ and $\Ex{x}$ is equivalent to $\Ex[0]{x}$.
The sequence $\{ x_0,\ldots,x_{N-1}\}$ is denoted $\{x_i\}_{i=0}^{N-1}$. We denote the value of a variable $x$ at time $k$ as $x_k$, and the $i$-step-ahead predicted value of $x$ at time $k$ is denoted $x_{i|k}$.

\section{PROBLEM DESCRIPTION} \label{section:problem description}

Consider an uncertain linear system with model
\begin{align}
x_{k+1}=Ax_{k}+Bu_{k}+\omega_k, \label{eqn:system dynamics:original state space model} 
\end{align}
where $x_{k}\in \mathbb{R}^{n_x}$, $u_{k}\in \mathbb{R}^{n_u}$ are the system state and the control input respectively. The unknown disturbance input $\omega_k \in \mathbb{R}^{n_x}$ is independently and identically distributed with known first and second moments:
\[
  \Ex{\omega_k}=0, \quad \Ex{\omega_k\omega^{T}_{k}}=W.
\]
Unlike the approaches of \cite{KOUVARITAKIS2010auto,cannon2011ieee}, which assume the additive disturbance lies in a compact set, the disturbance $\omega_k$ is not assumed to be bounded and its distribution may have infinite support. It is assumed that the system state is measured directly and available to the controller at each sample instant.

The system \eqref{eqn:system dynamics:original state space model} is subject to the constraint
\begin{equation}
\sum_{k=0}^\infty \gamma^k \mathbb{P}\bigl( \norm{C x_k} \geq t \bigr)  \leq e, \label{eqn:constraint:probability constraint of the original form}
\end{equation}
for a given matrix $C \in \mathbb{R}^{n_c\times n_x}$, positive scalars $t, e$ and  discounting factor $\gamma \in (0,1)$. This constraint gives a special feature to the control problem that the probability of future states violating the condition $\| C x_k \| < t$ at time instants nearer to the initial time are weighted more heavily than those in the far future. For simplicity we refer to $\mathbb{P}(\| C x_k\|\geq t)$ as a \textit{violation probability}.

The aim of this work is to design a controller that minimises the cost function
\begin{equation}
\Ex{\sum_{k=0}^\infty \norm{x_k-{x}^r}^2_Q+ \norm{u_k-u^r }^2_R} \label{eqn:cost:original form of the cost function}
\end{equation}
and ensures a quadratic stability condition on the closed loop system while the constraint \eqref{eqn:constraint:probability constraint of the original form} is satisfied.
The weighting matrices in~\eqref{eqn:cost:original form of the cost function} are assumed to satisfy $Q\succcurlyeq0$ and $R\succ0$, and we assume knowledge of 
reference targets ${x}^r$ and $u^r$ for the state and the control input that satisfy the steady state conditions
\begin{equation}
  \left(I-A\right)x^r=Bu^r , \quad \| C x^r \| <t .
  \label{eqn:reference value: model for reference state and control input}
\end{equation}

\begin{assumption}\label{assumption: controllability and observability}
$(A,B)$ is controllable and $(A,Q^{\frac{1}{2}})$ is observable.
\end{assumption}

\subsection{Finite horizon formulation}
The problem stated above employs an infinite horizon and is subject to a constraint defined on infinite horizon. If the infinite sequence of control inputs $\{u_k\}_{k=0}^{\infty}$ were considered to be decision variables, then clearly the optimisation problem would be infinite dimensional and thus in principle computationally intractable \cite{kouvaritakis2015mpcbook}. However, the use of an infinite horizon can impart desirable properties, notably stability \cite{Scokaert97,Mayne2000auto}. It is therefore beneficial to design an MPC law using a cost function and constraints that are defined on a finite horizon in such a way that they are equivalent to the infinite horizon cost and constraints of the original problem. The finite horizon cost function for a prediction horizon of $N$ steps is given by 
 \begin{equation}
\Ex{\sum_{i=0}^{N-1}\norm{x_{i|k}-{x}^r}^2_Q+ \norm{u_{i|k}-u^r }^2_R+F(x_{N|k})}
  \label{eqn:cost function:reformulated cost function using dual mode prediction}
\end{equation}
where $\Ex{F(x_{N|k})}$ is the terminal cost and $F(x)\geq 0$ for all $x$.
The constraint \eqref{eqn:constraint:probability constraint of the original form} is likewise truncated to a finite horizon:
\begin{equation}
  \sum_{i=0}^{N-1} \gamma^i \mathbb{P}\left( \norm{C x_{i|k}} \geq t \right)+ f(x_{N|k}) \leq \varepsilon_k .
  \label{eqn:constraint:probability constraint of the truncated form}
\end{equation}
Here $f(x_{N|k})$ is a terminal term 
chosen 
(as will be specified in \eqref{eq:term_constr} and Lemma \ref{lemma: expression for terminal term in constraint}) 
to approximate the infinite sum in \eqref{eqn:constraint:probability constraint of the original form} so that $\sum_{i=N}^{\infty}\gamma^i \mathbb{P} ( \|C x_{i|k}\| \geq t ) \leq f(x_{N|k})$, and $\varepsilon_k$ is a bound on the lhs of
\eqref{eqn:constraint:probability constraint of the truncated form}
that is achievable at time $k$. Although $\varepsilon_k$ may increase or decrease over time since it is conditioned on the system state at time $k$, we show in Section~\ref{section: epsilon sequence} that \eqref{eqn:constraint:probability constraint of the original form} is satisfied if $\varepsilon_0\leq e$ \mbox{and $\varepsilon_k$ is defined as described in Section~\ref{section:recursive feasibility}}.

Even with the cost and constraints defined as in \eqref{eqn:cost function:reformulated cost function using dual mode prediction}-\eqref{eqn:constraint:probability constraint of the truncated form} on a finite horizon, the probability distribution of states may be unknown at each time step and the finite horizon version of the problem is therefore still intractable in general. Even if the probability distribution of $\omega_k$ is known explicitly, computing~\eqref{eqn:cost function:reformulated cost function using dual mode prediction} and~\eqref{eqn:constraint:probability constraint of the truncated form} requires the solution of a set of multivariate convolution integrals, which in principle is still difficult to manage \cite{KOUVARITAKIS2010auto}.  

\subsection{Constraint handling and open loop optimisation}

This section considers how to approximate the finite horizon constraint \eqref{eqn:constraint:probability constraint of the truncated form} using the two-sided Chebyshev inequality~\cite[Section V.7]{feller71} and gives the explicit form of the MPC cost function. The cost and constraints are then combined to construct the MPC optimisation problem that is repeatedly solved online.
%
%
We define the sequence of control inputs predicted at time $k$ as
\begin{align}
&u_{i|k}=K( x_{i|k}-\bar{x}_{i|k})+m_{i|k},  & i&=0,\ldots,N-1 \label{eqn:control input:fisrt mode} \\
&u_{N+i|k}=K( x_{N+i|k}-{x}^r )+u^r, & i&=0,1,\ldots \label{eqn:control input:second mode} 
\end{align}
where $m_{i|k}$ is the $i$-step-ahead prediction of the nominal control input given information at time $k$, that is, $\Ex[k]{u_{i|k}}=m_{i|k}$, and $\bar{x}_{i|k}$ is the $i$-step-ahead prediction of the nominal state given information at time $k$, that is, $\Ex[k]{x_{i|k}}=\bar{x}_{i|k}$.

\begin{assumption}
$\Phi:=A+BK$ is strictly stable.
\end{assumption}

Given the predicted control law \eqref{eqn:control input:fisrt mode}-\eqref{eqn:control input:second mode}, the first two moments of the predicted state and control input sequences can be computed. Thus, the predicted nominal state trajectory is given by $\bar{x}_{0|k} = x_k$ and
\begin{align}
  & \bar{x}_{i|k}= A^i\bar{x}_{0|k}+\sum_{j=0}^{i-1} A^{i-1-j}Bm_{j|k} , & i&=1,\ldots,N
                                                                              \label{eqn:nominal state:dynamics describing the evolution of nominal state}\\
& \bar{x}_{N+i|k}= \Phi^i\left(\bar{x}_{N|k}-x^r\right)+x^r , &
                                                                i&=1,2,\ldots
                                                                   \label{eqn:nominal state:dynamics describing the evolution of nominal state after time N}
\end{align}
whereas the covariance matrix, $X_{i|k}$, of the $i$-step-ahead predicted state
is given by $X_{0|k}= 0$ and
\begin{equation}
X_{i|k}=\sum_{j=0}^{i-1} \Phi^jW\bigl(\Phi^j\bigr)^T,\quad i=1,2,\ldots . \label{eqn:covairance matrix of state:expression of covariance matrix in term of sums involving W}
\end{equation}
Clearly $X_{i|k}$ is independent of $k$, and in the following development we simplify notation by letting $\hat{X}_i := X_{i|k}$.

In this paper, we use Chebyshev's inequality to handle probabilistic constraints. The advantages of this approach are that it can cope with arbitrary or unknown disturbance probability distributions (the only information required being the first two moments of the predicted state trajectory), and furthermore it results in quadratic inequalities that are straightforward to implement. Approximating~\eqref{eqn:constraint:probability constraint of the truncated form} by direct application of the two-sided Chebyshev inequality \cite{Schildbach2015auto}, we obtain
\begin{align}
\frac{\tr ( C^T C\hat{X}_{i} )+\norm{ C\bar{x}_{i|k}}^2}{t^2} &\leq \beta_{i|k}, & i&=0,\ldots,N-1
  \label{eqn:constraint:new form,generated by chebyshev two-sided inequality}\\
\sum^{N-1}_{i=0}\gamma^i \beta_{i|k}+f(\bar{x}_{N|k}) &\leq \varepsilon_k,\label{eqn:constraint:new form,change infinite to finite horizon}
\end{align}
where $\{\beta_{i|k}\}_{i=0}^{N-1}$ is a sequence of non-negative scalars.
%
The terminal term $f(\bar{x}_{N|k})$ in~\eqref{eqn:constraint:new form,change infinite to finite horizon}  is chosen so that 
\begin{align}
f(\bar{x}_{N|k}) &= \frac{\tr(C^T C \widetilde{S})}{t^2}+
\frac{\gamma^N}{t^2} \biggl[\norm{\bar{x}_{N|k}-x^r}^2_{\widetilde{P}} 
+ \frac{\norm{x^r}^2_{C^TC}}{(1-\gamma)}\biggr] \nonumber \\
&+\frac{2\gamma^N (x^r)^TC^TC(I-\gamma \Phi)^{-1}(\bar{x}_{N|k}-x^r)}{t^2} 
\label{eq:term_constr}
\end{align}
where $\widetilde{S}\succ 0$, $\widetilde{P}\succ 0$, and $I-\gamma \Phi$ is invertible since $\gamma\Phi$ is strictly stable. The design of $\widetilde{S}, \widetilde {P}$ is discussed in Section \ref{section: epsilon sequence}.

In terms of the predicted nominal state trajectory in \eqref{eqn:nominal state:dynamics describing the evolution of nominal state}-\eqref{eqn:nominal state:dynamics describing the evolution of nominal state after time N}, the predicted cost is defined
\begin{multline}
  J(\bar{x}_{0|k},\{m_{i|k}\}_{i=0}^{N-1}, \varepsilon_k) := \norm{\bar{x}_{N|k}-x^r}^2_{P} \\
  + \sum_{i=0}^{N-1}\left(\norm{\bar{x}_{i|k}-{x}^r}^2_Q+ \norm{m_{i|k}-u^r }^2_R\right) 
\label{eqn:cost function: explicit cost function, final form}
\end{multline}
whenever a sequence $\{\beta_{i|k}\}_{i=0}^{N-1}$ exists satisfying \eqref{eqn:constraint:new form,generated by chebyshev two-sided inequality}-\eqref{eqn:constraint:new form,change infinite to finite horizon} for the  given $\bar{x}_{0|k}$, $\{m_{i|k}\}_{i=0}^{N-1}$ and $\varepsilon_k$. On the other hand, if $\bar{x}_{0|k}$, $\{m_{i|k}\}_{i=0}^{N-1}$ and $\varepsilon_k$ are such that constraints~\eqref{eqn:constraint:new form,generated by chebyshev two-sided inequality}-\eqref{eqn:constraint:new form,change infinite to finite horizon}  are infeasible, we set $ J(\bar{x}_{0|k},\{m_{i|k}\}_{i=0}^{N-1}, \varepsilon_k) := \infty$.
Note that $\| \bar{x}_{N|k}-x^r\|^2_{P}$ in \eqref{eqn:cost function: explicit cost function, final form} represents the terminal cost, and that $P \in \mathbb{S}^{n_{x}}_{++}$. The choice of $P$ is discussed in Section~\ref{section: smpc algorithm}.

To summarise, the MPC optimisation solved at time $k$ is
\begin{equation}\label{eq:mpc_optimisation}
J^\ast (x_k,\varepsilon_k) := \min_{\{m_{i|k}\}_{i=0}^{N-1}} J(x_k,\{m_{i|k}\}_{i=0}^{N-1}, \varepsilon_k) ,
\end{equation}
%
and its solution for any feasible $x_{k}$ and $\varepsilon_k$ is denoted
\begin{equation}
\bigl\{m^\ast_{i|k}(x_k,\varepsilon_k)\bigr\}_{i=0}^{N-1}:= \underset{\{m_{i|k}\}_{i=0}^{N-1}}{\arg\min} J\bigl(x_k,\{m_{i|k}\}_{i=0}^{N-1},\varepsilon_k\bigr). \label{eqn:minimiser of the open loop optimisation}
\end{equation}
For simplicity we write this solution as $\{m^\ast_{i|k}\}_{i=0}^{N-1}$,
with the understanding that this sequence depends on $x_k$ and $\varepsilon_k$. The corresponding nominal predicted state trajectory is given by
\begin{align}
  &\bar{x}^*_{i|k} = A^ix_{k}+\sum_{j=0}^{i-1} A^{i-1-j}Bm^*_{j|k} ,
  &
    i &= 1,\ldots,N
        \label{eqn:open loop system: optimal state trajectory, first mode}\\
  &\bar{x}^*_{N+i|k} = \Phi^i(\bar{x}^*_{N|k}-x^r)+x^r ,
  &
    i &= 1,2,\ldots .
        \label{eqn:open loop system: optimal state trajectory, second mode}
\end{align}
The MPC law at time $k$ is defined by
\begin{equation}
u_k := m^\ast_{0|k} ,  \label{eqn:closed loop control law}
\end{equation}
%
and the closed loop system dynamics are given by
\begin{equation}
  x_{k+1}=Ax_k+Bm^\ast_{0|k} (x_k, \varepsilon_k)+\omega_k ,
  \label{eqn:closed loop system dynamics, using the first element of optimal input sequence}
\end{equation}
where $\omega_k$ is the disturbance realisation at time $k$. 

In the remainder of this paper we discuss how to choose $\varepsilon_k$, $K$, $P$, $\widetilde{P}$ and $\widetilde{S}$ so as to guarantee quadratic stability and satisfaction of the constraint \eqref{eqn:constraint:probability constraint of the original form} under the MPC law \eqref{eqn:closed loop control law}.

\section{RECURSIVE FEASIBILITY} \label{section:recursive feasibility}

Recursively feasible MPC strategies have the property that the MPC optimisation problem is guaranteed to be feasible at every time-step if it is initially feasible.
This property can be ensured by imposing a terminal constraint that requires the predicted  system state to lie in a particular set at the end of the prediction horizon \cite{kouvaritakis2015mpcbook}. For a deterministic MPC problem, if an optimal solution can be found at current time, then the \textit{tail} sequence, namely the optimal control sequence shifted by one time-step, will be a feasible suboptimal solution at the next time instant if the terminal constraint is defined in terms of a suitable invariant set for the predicted system state \cite{Farina13,Lorenzen15}. For a robust MPC problem with bounded additive disturbances, recursive feasibility can likewise be guaranteed
under either open or closed loop optimisation strategies
by imposing a terminal constraint set that is robustly invariant. However, this approach is not generally applicable to systems with unbounded additive disturbances, and in general it is not possible to ensure recursive feasibility in this context while guaranteeing constraint satisfaction at every time instant.

In this section we propose a method for guaranteeing recursive feasibility of the MPC optimisation that does not rely on terminal constraints.
Instead recursive feasibility is ensured, despite the presence of unbounded disturbances, by allowing the constraint on the discounted sum of probabilities to be time-varying. For all time-steps $k > 0$, the approach uses the optimal sequence computed at time $k-1$ to determine a value of $\varepsilon_{k}$ that is necessarily feasible at time $k$. Using this approach it is possible to choose $\varepsilon_0$ so that the original constraint \eqref{eqn:constraint:probability constraint of the original form} is satisfied, as we discuss in Section~\ref{section: epsilon sequence}.

We use the notation $\mathscr{S}(\{m^\ast_{i|k}\}_{i=0}^{N-1})$ to denote a nominal control sequence derived from a time-shifted version of $\{m^\ast_{i|k}\}_{i=0}^{N-1}$, defined by
\begin{IEEEeqnarray}{C}
\mathscr{S}\left(\{m^\ast_{i|k}\}_{i=0}^{N-1}\right):=\{m^\ast_{i+1|k}+K\Phi^i\omega_k\}_{i=0}^{N-1},\label{eqn: shifted version of the solution to the optimisation of current time}
\end{IEEEeqnarray} 
with $m^\ast_{N|k}:={K} ( \bar{x}^\ast_{N|k}-{x}^r )+u^r$.
Note that the disturbance realisation $\omega_k$ can be computed given the measured state $x_{k+1}$ and hence the sequence $\mathscr{S}(\{m^\ast_{i|k}\}_{i=0}^{N-1})$ is available to the controller at time $k+1$.

\begin{lemma} \label{theorem:theorem for recursive feasibility}
  The MPC optimisation~\eqref{eq:mpc_optimisation} is recursively feasible if $\varepsilon_k$ is defined at each time $k=1,2,\ldots$ as
\begin{equation}
  \varepsilon_{k}:=\min \Big\{\varepsilon \mid J\Bigl(x_{k},\mathscr{S}\left(\{m^\ast_{i|k-1}\}_{i=0}^{N-1}\right),\varepsilon\Bigr)<\infty\Big\} . \label{eqn: updating epsilon: optimisation by solving which gives new epsilon}
\end{equation}
\end{lemma}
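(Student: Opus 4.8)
The plan is to show that the time-shifted sequence $\mathscr{S}(\{m^\ast_{i|k-1}\}_{i=0}^{N-1})$, which is available at time $k$ by the remark preceding the lemma, is itself a feasible (generally suboptimal) solution of the MPC optimisation \eqref{eq:mpc_optimisation} at time $k$ for the value of $\varepsilon_k$ defined in \eqref{eqn: updating epsilon: optimisation by solving which gives new epsilon}. Feasibility of the optimisation at time $k$ amounts to the existence of a scalar $\varepsilon$ and a sequence $\{\beta_{i|k}\}_{i=0}^{N-1}$ of non-negative numbers satisfying \eqref{eqn:constraint:new form,generated by chebyshev two-sided inequality}--\eqref{eqn:constraint:new form,change infinite to finite horizon}, equivalently $J(x_k,\mathscr{S}(\{m^\ast_{i|k-1}\}_{i=0}^{N-1}),\varepsilon)<\infty$ for some $\varepsilon$. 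The definition \eqref{eqn: updating epsilon: optimisation by solving which gives new epsilon} takes $\varepsilon_k$ to be precisely the minimal such $\varepsilon$, so the only thing that actually needs proof is that the set $\{\varepsilon \mid J(x_k,\mathscr{S}(\{m^\ast_{i|k-1}\}_{i=0}^{N-1}),\varepsilon)<\infty\}$ is nonempty and that its infimum is attained.

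Nonemptiness is the crux. First I would observe that, since the constraint \eqref{eqn:constraint:new form,generated by chebyshev two-sided inequality} involves only $\hat{X}_i$ (which is $k$-independent) and $C\bar{x}_{i|k}$, and since $f(\bar{x}_{N|k})$ in \eqref{eq:term_constr} is finite for any finite $\bar{x}_{N|k}$, one can always choose each $\beta_{i|k}$ equal to the left-hand side of \eqref{eqn:constraint:new form,generated by chebyshev two-sided inequality}, which is a finite non-negative number, and then set $\varepsilon := \sum_{i=0}^{N-1}\gamma^i\beta_{i|k} + f(\bar{x}_{N|k})$, which is finite. Hence for \emph{any} admissible nominal input sequence — in particular for $\mathscr{S}(\{m^\ast_{i|k-1}\}_{i=0}^{N-1})$, whose induced nominal state trajectory via \eqref{eqn:nominal state:dynamics describing the evolution of nominal state} is finite — there is a finite $\varepsilon$ making $J<\infty$. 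This shows the feasible $\varepsilon$-set is nonempty. Attainment of the minimum then follows because, for a fixed nominal input sequence, $J(x_k,\cdot,\varepsilon)<\infty$ holds precisely when $\varepsilon \geq \varepsilon_{\min} := \sum_{i=0}^{N-1}\gamma^i \beta_{i|k}^{\star} + f(\bar{x}_{N|k})$ with $\beta_{i|k}^\star$ the left-hand side of \eqref{eqn:constraint:new form,generated by chebyshev two-sided inequality} (the smallest admissible $\beta_{i|k}$, since \eqref{eqn:constraint:new form,change infinite to finite horizon} is monotone in the $\beta_{i|k}$ and they are otherwise free); thus the set is a closed half-line $[\varepsilon_{\min},\infty)$ and the minimiser is $\varepsilon_k=\varepsilon_{\min}$.

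Given that $\mathscr{S}(\{m^\ast_{i|k-1}\}_{i=0}^{N-1})$ is feasible at time $k$ with this $\varepsilon_k$, the optimisation \eqref{eq:mpc_optimisation} is feasible at time $k$, i.e. $J^\ast(x_k,\varepsilon_k)<\infty$. Finally I would wrap up by induction on $k$: feasibility at time $k-1$ produces the optimal sequence $\{m^\ast_{i|k-1}\}_{i=0}^{N-1}$ used in \eqref{eqn: updating epsilon: optimisation by solving which gives new epsilon}, and the argument above then yields feasibility at time $k$; the base case is the standing assumption that the problem is feasible at $k=0$. The main obstacle is essentially bookkeeping rather than a deep difficulty: one must verify carefully that $f(\bar{x}_{N|k})$ as defined in \eqref{eq:term_constr} is finite for every finite argument (immediate, since it is quadratic in $\bar{x}_{N|k}$ with fixed coefficient matrices and $I-\gamma\Phi$ invertible) and that the shifted nominal trajectory remains finite (immediate from \eqref{eqn:nominal state:dynamics describing the evolution of nominal state} and the definition \eqref{eqn: shifted version of the solution to the optimisation of current time}, since $\omega_k$ is a fixed realisation). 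No disturbance-boundedness is needed anywhere, which is exactly the point of allowing $\varepsilon_k$ to vary with $k$.
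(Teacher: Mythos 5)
Your proposal is correct and follows essentially the same route as the paper's proof: the shifted sequence $\mathscr{S}(\{m^\ast_{i|k-1}\}_{i=0}^{N-1})$ is well-defined at time $k$ whenever the optimisation was feasible at time $k-1$, and for any fixed finite nominal input sequence the set of $\varepsilon$ rendering $J$ finite is a nonempty closed half-line, so the minimum in \eqref{eqn: updating epsilon: optimisation by solving which gives new epsilon} is attained and certifies feasibility at time $k$; induction from feasibility at $k=0$ completes the argument. Your write-up is in fact more explicit than the paper's on why the infimum is attained (the half-line structure, consistent with the closed-form expression \eqref{eqn:epsilon of next time instant:expression of new epsilon using results obtained at current time}), but this is a refinement of the same argument rather than a different one.
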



\begin{proof}
  The definition of the MPC predicted cost implies that, for any given sequence $\{m_{i|k}\}_{i=0}^{N-1}$, there necessarily exists a value of $\varepsilon$ such that $J(x_k,\{m_{i|k}\}_{i=0}^{N-1},\varepsilon)$ is finite.
  Moreover $\mathscr{S}(\{m^\ast_{i|k-1}\}_{i=0}^{N-1})$ is (with probability 1) well-defined at time $k$ if the MPC optimisation is feasible at time $k-1$. It follows that the minimum value of $\varepsilon$ defining $\varepsilon_k$ in \eqref{eqn: updating epsilon: optimisation by solving which gives new epsilon} exists if the MPC optimisation is feasible at time $k-1$, and this establishes recursive feasibility
\end{proof}

The sequence $\mathscr{S}(\{m^\ast_{i|k}\}_{i=0}^{N-1})$ can be regarded as the tail of the minimiser~\eqref{eqn:minimiser of the open loop optimisation} with adjustments. With equations \eqref{eqn:nominal state:dynamics describing the evolution of nominal state} and \eqref{eqn:nominal state:dynamics describing the evolution of nominal state after time N}, the minimisation \eqref{eqn: updating epsilon: optimisation by solving which gives new epsilon} can be solved to give an explicit expression for $\varepsilon_{k}$ for all $k > 0$ as
\begin{align}
\varepsilon_{k} &=\sum_{i=0}^{N-1}\gamma^i\frac{\tr\bigl(C^TC\hat{X}_{i}\bigr)+\bigl\| C \bigl(\bar{x}_{i+1|k-1}^\ast+\Phi^i\omega_{k-1}\bigr)\bigr\|^2}{t^2} \nonumber \\
&\quad +f\bigl(\bar{x}^\ast_{N+1|k-1}+\Phi^N\omega_{k-1}\bigr).\label{eqn:epsilon of next time instant:expression of new epsilon using results obtained at current time}
\end{align}


Essentially, the optimisation problem to be solved at each time step is feasible because the parameter $\varepsilon_k$ is updated via~\eqref{eqn:epsilon of next time instant:expression of new epsilon using results obtained at current time} using knowledge of the disturbance $w_{k-1}$ obtained from the measurement of the current state $x_k$. In this respect the approach is similar to constraint-tightening methods that have previously been applied in the context of stochastic MPC (e.g.~\cite{korda14,Lorenzen17ieeetr,fleming17}) in order to ensure recursive feasibility and constraint satisfaction in closed loop operation. However, each of these methods requires that the disturbances affecting the controlled system are bounded, and they become more conservative as the degree of conservativeness of the assumed disturbance bounds increases. The approach proposed here avoids this requirement and instead ensures closed loop constraint satisfaction using the analysis of Section~\ref{section: epsilon sequence}.

The key to this method lies in the definition of the sequence $\mathscr{S}(\{m^\ast_{i|k}\}_{i=0}^{N-1})$. If this sequence were optimised simultaneously with $\varepsilon_k$, rather than defined by the suboptimal control sequence~\eqref{eqn: shifted version of the solution to the optimisation of current time}, then
it would be possible to reduce the MPC cost~\eqref{eq:mpc_optimisation}. However this would require more computational effort than is needed to evaluate~\eqref{eqn:epsilon of next time instant:expression of new epsilon using results obtained at current time}.
For deterministic MPC problems it can be shown that the cost of using the tail sequence is no greater than the optimal cost at the current time with an appropriate terminal weighting matrix \cite{Mayne2000auto}, but this property cannot generally be ensured in the presence of unbounded disturbances. In fact the optimal cost defined by \eqref{eq:mpc_optimisation} is not necessarily monotonically non-increasing if $\varepsilon_k$ is defined by \eqref{eqn:epsilon of next time instant:expression of new epsilon using results obtained at current time}, but the proposed approach based on the adjusted tail sequence~\eqref{eqn: shifted version of the solution to the optimisation of current time}
ensures a quadratic closed loop stability bound, as we discuss next.


\section{SMPC ALGORITHM} \label{section: smpc algorithm}

This section analyses the stability of the MPC law and shows that the closed loop system satisfies a quadratic stability condition.
We first state the MPC algorithm based on the optimisation defined in~\eqref{eq:mpc_optimisation}.
\vspace{2mm}
\begin{algorithm}\label{algorithm:SPMC algorithm}
  At each time-step $k=0,1,\ldots$:
  \begin{enumerate}[(i).] 
  \item
    Measure $x_k$, and if $k>0$, compute $\varepsilon_k$ using \eqref{eqn:epsilon of next time instant:expression of new epsilon using results obtained at current time}.
  \item Solve the quadratically constrained quadratic programming (QCQP) problem:
    \[
      \begin{aligned}
        \minimise_{\{m_{i|k},\, \beta_{i|k} \}_{i=0}^{N-1}} \ & 
        \sum_{i=0}^{N-1}\Bigl(\bigl\|\bar{x}_{i|k}-{x}^r\bigr\|^2_Q+ \bigl\| m_{i|k}-u^r \bigr\|^2_R\Bigr)  \\
        & + \bigl\|\bar{x}_{N|k}-x^r\bigr\|^2_{P}  \\
      \end{aligned}
    \]
    subject to \eqref{eqn:constraint:new form,generated by chebyshev two-sided inequality}, \eqref{eqn:constraint:new form,change infinite to finite horizon}, and \eqref{eqn:nominal state:dynamics describing the evolution of nominal state} with $\bar{x}_{0|k} = x_k$.
  \item Apply the control law $u_k = m_{0|k}^\ast$.
  \end{enumerate}
\end{algorithm}
\vspace{2mm}


Although the MPC optimisation in step (ii) involves a quadratic constraint as well as linear constraints, it can be solved efficiently, for example using a second-order conic program (SOCP) solver, since the objective and the quadratic constraint are both convex.

\begin{theorem} \label{theorem:stability result}
Given initial feasibility at $k=0$, the minimisation in step (ii) of Algorithm~\ref{algorithm:SPMC algorithm} is feasible for $k=1,2,\ldots$ and the closed loop system satisfies the quadratic stability condition
\begin{equation}
\lim_{T\to\infty}\frac{1}{T}\sum_{k=0}^{T-1}\Ex{\norm{{x}_{k}-{x}^r}^2_Q+\norm{u_{k}-u^r}^2_R} 
\leq \tr (W P) \label{eqn:quadratic stability result}
\end{equation}
provided $K$ in \eqref{eqn:control input:fisrt mode}-\eqref{eqn:control input:second mode} and $P$ in \eqref{eqn:cost function: explicit cost function, final form} are chosen so that 
\begin{equation}
P = \Phi^TP \Phi + K^TR K +Q. \label{eqn:discrete time Lyapunov equation for compute penalty matrix in terminal cost}
\end{equation}
\end{theorem}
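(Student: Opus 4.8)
The plan is to use the optimal value function $V_k := J^\ast(x_k,\varepsilon_k)$ as a stochastic Lyapunov function and to derive a one-step expected-decrease inequality of supermartingale type. The recursive feasibility claim is immediate: Lemma~\ref{theorem:theorem for recursive feasibility} already shows that defining $\varepsilon_k$ by \eqref{eqn:epsilon of next time instant:expression of new epsilon using results obtained at current time} keeps \eqref{eq:mpc_optimisation} feasible for every $k\geq 1$ given feasibility at $k=0$, and step~(ii) of Algorithm~\ref{algorithm:SPMC algorithm} is precisely this optimisation.

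For the stability bound I would take the adjusted tail $\mathscr{S}(\{m^\ast_{i|k}\}_{i=0}^{N-1})$ of \eqref{eqn: shifted version of the solution to the optimisation of current time} as a feasible suboptimal candidate at time $k+1$. Using $x_{k+1}=\bar{x}^\ast_{1|k}+\omega_k$ together with \eqref{eqn:nominal state:dynamics describing the evolution of nominal state} and $\Phi=A+BK$, one shows that the nominal predictions generated by this tail are $\bar{x}_{i|k+1}=\bar{x}^\ast_{i+1|k}+\Phi^i\omega_k$ for $i=0,\ldots,N$ and $m_{i|k+1}=m^\ast_{i+1|k}+K\Phi^i\omega_k$ for $i=0,\ldots,N-1$. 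Since $\varepsilon_{k+1}$ is by construction the smallest $\varepsilon$ for which this sequence is feasible, and the cost \eqref{eqn:cost function: explicit cost function, final form} depends on $\varepsilon$ only through feasibility, it follows that $V_{k+1}\leq J\bigl(x_{k+1},\mathscr{S}(\{m^\ast_{i|k}\}_{i=0}^{N-1}),\varepsilon_{k+1}\bigr)$, the right-hand side being simply the quadratic cost of the tail. Substituting the expressions above and taking $\Ex[k]{\cdot}$, the terms linear in $\omega_k$ vanish by independence and zero mean, and each squared term splits into its nominal quadratic value plus a trace term involving $\Phi^i W (\Phi^i)^T$.

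It then remains to collapse the two resulting families of terms. For the quadratic parts, re-indexing the sum and using $m^\ast_{N|k}=K(\bar{x}^\ast_{N|k}-x^r)+u^r$ and $\bar{x}^\ast_{N+1|k}=\Phi(\bar{x}^\ast_{N|k}-x^r)+x^r$ together with the Lyapunov equation \eqref{eqn:discrete time Lyapunov equation for compute penalty matrix in terminal cost} gives the telescoping identity $\|\bar{x}^\ast_{N+1|k}-x^r\|^2_P+\|\bar{x}^\ast_{N|k}-x^r\|^2_{Q+K^TRK}=\|\bar{x}^\ast_{N|k}-x^r\|^2_P$, so the quadratic parts sum to $V_k-\|x_k-x^r\|^2_Q-\|u_k-u^r\|^2_R$. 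For the trace parts, iterating \eqref{eqn:discrete time Lyapunov equation for compute penalty matrix in terminal cost} gives $P=\sum_{j=0}^{\infty}(\Phi^j)^T(Q+K^TRK)\Phi^j$, hence $\tr\bigl(P\Phi^N W(\Phi^N)^T\bigr)=\sum_{j\geq N}\tr\bigl((Q+K^TRK)\Phi^j W(\Phi^j)^T\bigr)$; combined with the finite sum over $i=0,\ldots,N-1$ all the trace terms assemble to exactly $\tr(WP)$. This yields $\Ex[k]{V_{k+1}}\leq V_k-\|x_k-x^r\|^2_Q-\|u_k-u^r\|^2_R+\tr(WP)$.

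Finally I would take total expectations, so $\Ex{V_{k+1}}\leq \Ex{V_k}-\Ex{\|x_k-x^r\|^2_Q+\|u_k-u^r\|^2_R}+\tr(WP)$ with all terms finite by induction from $V_0=J^\ast(x_0,\varepsilon_0)<\infty$, then sum over $k=0,\ldots,T-1$ and use $V_T\geq 0$ to obtain $\sum_{k=0}^{T-1}\Ex{\|x_k-x^r\|^2_Q+\|u_k-u^r\|^2_R}\leq J^\ast(x_0,\varepsilon_0)+T\tr(WP)$; dividing by $T$ and letting $T\to\infty$ gives \eqref{eqn:quadratic stability result}. The main obstacle I expect is the bookkeeping: propagating the tail correctly through the prediction dynamics so that the Lyapunov equation produces the \emph{exact} telescoping, and checking that the noise-induced trace terms sum to precisely $\tr(WP)$ rather than merely an upper bound --- both rest on $P$ solving \eqref{eqn:discrete time Lyapunov equation for compute penalty matrix in terminal cost} with equality. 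A minor point worth noting is that the stated limit should be understood as a $\limsup$, since the argument controls only $\tfrac{1}{T}J^\ast(x_0,\varepsilon_0)\to 0$.
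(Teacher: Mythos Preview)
Your proposal is correct and follows essentially the same route as the paper: recursive feasibility from Lemma~\ref{theorem:theorem for recursive feasibility}, the tail $\mathscr{S}(\{m^\ast_{i|k}\}_{i=0}^{N-1})$ as a feasible candidate yielding $\bar{x}_{i|k+1}=\bar{x}^\ast_{i+1|k}+\Phi^i\omega_k$, the Lyapunov equation~\eqref{eqn:discrete time Lyapunov equation for compute penalty matrix in terminal cost} to obtain the one-step decrease~\eqref{eqn:cost comparison in mean, one optimal and one feasible at time k+1}, and then summing and averaging. Your bookkeeping for the trace terms and the telescoping is more explicit than the paper's terse ``from~\eqref{eqn:discrete time Lyapunov equation for compute penalty matrix in terminal cost} and~\eqref{eqn:expected_cost_diff} it follows that'', and your remark that the conclusion is properly a $\limsup$ is well taken.
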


\begin{proof}
  From Lemma \ref{theorem:theorem for recursive feasibility}, the sequence $\mathscr{S}\bigl(\{m^\ast_{i|k}\}_{i=0}^{N-1}\bigr)$ provides a feasible suboptimal solution at time $k+1$. Hence by optimality we necessarily have
  \[
    J^\ast(x_{k+1},\varepsilon_{k+1}) \leq
    J\bigl(x_{k+1}, \mathscr{S} \bigl(\{m^\ast_{i|k}\}_{i=0}^{N-1}\bigr), \varepsilon_{k+1}\bigr),
  \]
and since this inequality holds for every realisation of $\omega_{k}$, by taking expectations conditioned on the state $x_k$ we obtain
\begin{equation}
\mathbb{E}_k\bigl[ J^\ast (x_{k+1},\varepsilon_{k+1})\bigr] \! \! \leq \! \mathbb{E}_k\bigl[ J \bigl(x_{k+1}, \mathscr{S} \bigl(\{m^\ast_{i|k}\}_{i=0}^{N-1}\bigr),\varepsilon_{k+1}\bigr) \bigr] \!.
\label{eqn:expected_cost_diff}
\end{equation}
Evaluating $\bar{x}_{i|k+1}$ by setting $\bar{x}_{0|k+1} = x_{k+1}$ and $m_{i|k+1} = m^\ast_{i+1|k} + K \Phi^i \omega_k$ in
\eqref{eqn:nominal state:dynamics describing the evolution of nominal state}-\eqref{eqn:nominal state:dynamics describing the evolution of nominal state after time N} gives the feasible sequence
\[
  \bar{x}_{i|k+1} = \bar{x}^\ast_{i+1|k} + \Phi^i \omega_k , \qquad i = 0,\ldots,N ,
\]
and from \eqref{eqn:discrete time Lyapunov equation for compute penalty matrix in terminal cost} and \eqref{eqn:expected_cost_diff} it follows that
\begin{multline}
  \Ex[k]{ J^\ast (x_{k+1},\varepsilon_{k+1})} \leq J^\ast (x_k,\varepsilon_k) - \| x_k - x^r \|_Q^2 \\- \| u_k - u^r \|_R^2 + \tr(W P) .
  \label{eqn:cost comparison in mean, one optimal and one feasible at time k+1}
\end{multline}
Summing both sides of this inequality over $k\geq 0$ after taking expectations given information available at time ${k=0}$, and making use of the property that $\Ex[0]{\Ex[k]{J^\ast (x_{k+1},\varepsilon_{k+1})}} = \Ex[0]{J^\ast (x_{k+1},\varepsilon_{k+1})}$, gives \eqref{eqn:quadratic stability result}. 
\end{proof}

Stability is the overriding requirement and in most recent MPC literature the cost function is chosen so as to provide a Lyapunov function suitable for analysing closed loop stability~\cite{Mayne2000auto}. Theorem~\ref{theorem:stability result} is proved via cost comparison, and, given the quadratic form of the cost function, this analysis results in the quadratic stability condition~\eqref{eqn:quadratic stability result}. 
Similar asymptotic bounds on the time average of a quadratic expected stage cost are obtained in \cite{KOUVARITAKIS2010auto,Cannon09ieetr}.
However, in the current context, Theorem~\ref{theorem:stability result} demonstrates that an MPC algorithm can ensure closed loop stability without imposing terminal constraints derived from an invariant set.
 
\begin{lemma} \label{theorem:convergence of closed loop controller to unconstrained LQ optimal}
  If $K$ in \eqref{eqn:control input:fisrt mode}-\eqref{eqn:control input:second mode} is the unconstrained LQ-optimal feedback gain, $K_{LQ}$, for the system \eqref{eqn:system dynamics:original state space model} with cost~\eqref{eqn:cost:original form of the cost function}, then for the closed loop system under the control strategy of Algorithm~\ref{algorithm:SPMC algorithm},
the control law $u_k = m_{0|k}^\ast$ converges  as $k\to\infty$ to the unconstrained optimal feedback law $u_k = K_{LQ}x_k$.
\end{lemma}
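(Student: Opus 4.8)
The plan is to show that the sequence of tightening parameters $\{\varepsilon_k\}$ eventually stops binding, so that the constraints \eqref{eqn:constraint:new form,generated by chebyshev two-sided inequality}-\eqref{eqn:constraint:new form,change infinite to finite horizon} become inactive and the MPC optimisation reduces to the unconstrained finite-horizon LQ problem, whose solution with $K = K_{LQ}$ is well known to recover the infinite-horizon LQ law. The first step is to extract a convergence property of the closed loop state from Theorem~\ref{theorem:stability result}: the bound \eqref{eqn:quadratic stability result} together with Assumption~\ref{assumption: controllability and observability} (observability of $(A,Q^{1/2})$) implies that $\Ex{\norm{x_k - x^r}^2}$ is bounded and, more to the point, that $x_k$ converges (in mean square) towards a neighbourhood of $x^r$ whose size is governed by $\tr(WP)$; in particular the nominal initial state $\bar{x}_{0|k}=x_k$ ceases to excite the constraints except through the irreducible disturbance covariance terms $\hat{X}_i$, which are $k$-independent.

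Next I would track the dynamics of $\varepsilon_k$ itself, using the explicit recursion \eqref{eqn:epsilon of next time instant:expression of new epsilon using results obtained at current time}. The key observation is that when the suboptimal tail sequence $\mathscr{S}(\{m^\ast_{i|k-1}\}_{i=0}^{N-1})$ is used, the nominal predicted states $\bar{x}^\ast_{i+1|k-1}+\Phi^i\omega_{k-1}$ are exactly the states that would be obtained by propagating $x_k$ under the feedback $K$ with zero future nominal inputs beyond those inherited from time $k-1$; combined with the mean-square convergence of $x_k$ to a bounded region around $x^r$ and the boundedness of $\Ex{\omega_{k-1}\omega_{k-1}^T}=W$, one shows $\limsup_k \Ex{\varepsilon_k}$ is bounded. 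The analysis of Section~\ref{section: epsilon sequence} (which guarantees \eqref{eqn:constraint:probability constraint of the original form} holds and characterises the closed-loop behaviour of $\varepsilon_k$) should furnish the precise statement needed — essentially that the "slack" in the discounted chance constraint does not vanish asymptotically, so that for $k$ large the optimal $\{\beta_{i|k}\}$ can be chosen to satisfy \eqref{eqn:constraint:new form,change infinite to finite horizon} strictly, rendering the quadratic constraints \eqref{eqn:constraint:new form,generated by chebyshev two-sided inequality} non-binding at the optimum.

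Once the probabilistic constraints are shown to be inactive at the optimum for all sufficiently large $k$ (almost surely, or in the appropriate probabilistic sense), the MPC optimisation \eqref{eq:mpc_optimisation} coincides with the unconstrained minimisation of \eqref{eqn:cost function: explicit cost function, final form} over $\{m_{i|k}\}_{i=0}^{N-1}$. With $P$ satisfying the Lyapunov equation \eqref{eqn:discrete time Lyapunov equation for compute penalty matrix in terminal cost} for $\Phi = A + BK_{LQ}$, this $P$ is the LQ value matrix, so the finite-horizon unconstrained optimal nominal control is the LQ law itself: $m^\ast_{i|k} = K_{LQ}(\bar{x}^\ast_{i|k}-x^r)+u^r$, and in particular $u_k = m^\ast_{0|k} = K_{LQ} x_k$ (absorbing the affine term via \eqref{eqn:reference value: model for reference state and control input}). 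The main obstacle I anticipate is the second step: making rigorous the claim that the discounted-probability constraint becomes slack in the limit. This is not automatic from the stability bound alone, since \eqref{eqn:quadratic stability result} only controls a Cesàro average and permits recurrent large excursions of $x_k$; the argument must instead lean on the detailed $\varepsilon_k$-recursion and the fact that $\varepsilon_k$ is reset from a suboptimal-but-feasible tail, which should prevent $\varepsilon_k$ from drifting to the constraint boundary. I expect the clean way to close this gap is to invoke the results of Section~\ref{section: epsilon sequence} showing $\varepsilon_k$ (or its expectation) converges, and then argue that its limit strictly exceeds the minimal discounted violation sum achievable under the LQ law from the limiting state distribution.
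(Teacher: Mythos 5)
Your route --- showing that the tightened chance constraints eventually become inactive so that the MPC optimisation degenerates to the unconstrained finite-horizon LQ problem --- is genuinely different from the paper's, and the obstacle you flag at the end is a real gap, not a technicality that Section~\ref{section: epsilon sequence} closes for you. Theorem~\ref{thm:epsilon} gives $\gamma\Ex[k]{\varepsilon_{k+1}} \leq \varepsilon_k - \norm{Cx_k}^2/t^2$ with the factor $\gamma<1$ on the left, so it actually \emph{permits} $\Ex{\varepsilon_k}$ to grow (the paper remarks explicitly that $\varepsilon_k$ can increase as well as decrease), and it says nothing about whether the unconstrained LQ minimiser started from $x_k$ satisfies \eqref{eqn:constraint:new form,change infinite to finite horizon} with that $\varepsilon_k$. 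Moreover $\varepsilon_k$ is by construction the value attained \emph{with equality} by the adjusted tail sequence \eqref{eqn: shifted version of the solution to the optimisation of current time}, so the only available slack is the amount by which the unconstrained optimum beats the tail in discounted violation sum --- a comparison your sketch never makes. The paper's own numerical example shows the unconstrained LQ law producing a discounted bound of $4.6998 > e = 3.5$, so constraint inactivity cannot be taken for granted and would have to be proved asymptotically; neither your sketch nor any result in the paper supplies that step. Your final step (once the constraints are inactive, $P$ from \eqref{eqn:discrete time Lyapunov equation for compute penalty matrix in terminal cost} is the LQ value matrix and the unconstrained minimiser is the LQ law) is fine, but it rests on the unproven middle step.

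The paper avoids the constraint-activity question entirely with a short squeeze argument: for any stabilising linear gain $K$ the stationary expected stage cost is $\tr(WP)$ with $P$ solving \eqref{eqn:discrete time Lyapunov equation for compute penalty matrix in terminal cost}; the certainty equivalence theorem says $\tr(WP)$ is minimised over such $K$ at $K=K_{LQ}$; and Theorem~\ref{theorem:stability result} upper-bounds the closed-loop Ces\`aro-average stage cost by that same minimal value $\tr(WP)$. Hence the MPC closed loop attains the minimal achievable long-run average cost, and under Assumption~\ref{assumption: controllability and observability} the control law must converge to $u_k = K_{LQ}x_k$. To repair your proposal you would need to import an optimality argument of exactly this type; as written, the proposal does not prove the lemma.
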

\begin{proof}
  Consider a system with the same model parameters $A,B,W$ as~\eqref{eqn:system dynamics:original state space model}, and a stabilizing linear feedback law with gain $K$.
  Denoting the states and control inputs of this system respectively as $\hat{x}_k$ and $\hat{u}_k = K\hat{x}_k$, we have
  \begin{equation}  \lim_{k\to\infty}\mathbb{E}
    \Bigl[ \norm{\hat{x}_{k}-x^r}^2_Q+\norm{\hat{u}_{k}-u^r}^2_R\Bigr]=\tr(WP)
  \label{eqn: another closed loop result: value of the expectation of cost at time infinity using LQ-optimal}
\end{equation}
where $P$ is the solution of \eqref{eqn:discrete time Lyapunov equation for compute penalty matrix in terminal cost}.  
%
However the certainty equivalence theorem \cite{Chow75dynamic} implies that $\tr(WP)$ is minimized with $K=K_{LQ}$. Therefore \eqref{eqn:quadratic stability result} implies 
\begin{align*}
\lim_{T\to\infty}\frac{1}{T}\sum_{k=0}^{T-1} & \mathbb{E}\Bigl[\norm{{x}_{k}-{x}^r}^2_Q+\norm{u_{k}-u^r}^2_R\Bigr] \\
&=\lim_{T\to\infty}\mathbb{E}\Bigl[\norm{\hat{x}_{T}-x^r}^2_Q+\norm{\hat{u}_{T}-u^r}^2_R\Bigr],
\end{align*}
so that $u_k \to K_{LQ} x_k$ as $k\to\infty$ under Assumption \ref{assumption: controllability and observability}.
\end{proof}

The convergence result in Lemma~\ref{theorem:convergence of closed loop controller to unconstrained LQ optimal} is to be expected because of the discounted constraint~\eqref{eqn:constraint:probability constraint of the original form}.
Since $\gamma^k \rightarrow 0$ as $k \rightarrow \infty$, the probabilistic constraint places greater emphasis on near-future predicted states and ignores asymptotic behaviour.
Under this condition the unconstrained LQ-optimal feedback control law is asymptotically optimal for~\eqref{eqn:cost:original form of the cost function}.

\section{THE BEHAVIOUR OF THE SEQUENCE $\{\varepsilon_k\}_{k=0}^{\infty}$ AND CONSTRAINT SATISFACTION} \label{section: epsilon sequence}

This section considers the properties of the sequence $\{\varepsilon_k\}_{k=0}^{\infty}$ in closed loop operation under Algorithm~\ref{algorithm:SPMC algorithm}.
We first give expressions for the parameters $\widetilde{S}$ and $\widetilde{P}$ in the definition~\eqref{eq:term_constr} of the terminal term $f(\bar{x}_{N|k})$.
Then, using the explicit expression for $\varepsilon_k$ in~\eqref{eqn:epsilon of next time instant:expression of new epsilon using results obtained at current time}, 
we derive a recurrence equation relating the expected value of $\varepsilon_{k+1}$ to $x_k$ and $\varepsilon_k$. This allows an upper bound to be determined for the sum of discounted violation probabilities on the left hand side of~\eqref{eqn:constraint:probability constraint of the original form}. With this bound we can show that the closed loop system under the control law of Algorithm~\ref{algorithm:SPMC algorithm} satisfies the chance constraint~\eqref{eqn:constraint:probability constraint of the original form} if $\varepsilon_k$ is initialised with $\varepsilon_0=e$.

\begin{lemma} \label{lemma: expression for terminal term in constraint}
  Let $\widetilde{S}$ and $\widetilde{P}$ be the solutions of
  \begin{align}
    &\widetilde{P} = \gamma\Phi^T \widetilde{P} \Phi + C^TC \label{eq:Plyap}\\
    &\widetilde{S} = \gamma\Phi \widetilde{S} \Phi^T + \frac{\gamma^{N+1}}{1-\gamma} W + \gamma^N \hat{X}_N . \label{eq:Slyap}
  \end{align}
  Then $f(\bar{x}_{N|k})$ defined in~\eqref{eq:term_constr} satisfies
  \begin{equation}\label{eq:fdef}
    f(\bar{x}_{N|k})  = \sum_{i=N}^\infty \gamma^i \frac{\tr \bigl( C^T C\hat{X}_{i}\bigr)+\norm{ C\bar{x}_{i|k}}^2}{t^2}
  \end{equation}
  where $\bar{x}_{i|k}$ is given by~\eqref{eqn:nominal state:dynamics describing the evolution of nominal state after time N} for all $i\geq N$.
\end{lemma}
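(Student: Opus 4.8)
The plan is to identify each of the two infinite series on the right of \eqref{eq:fdef} --- the one built from the covariances $\hat X_i$ and the one built from the nominal states $\bar x_{i|k}$ --- with a closed-form expression determined by one of the Lyapunov equations \eqref{eq:Slyap}, \eqref{eq:Plyap}, and then to check that combining these closed forms reproduces \eqref{eq:term_constr} term by term. Convergence of every series involved follows from $\gamma\in(0,1)$ together with strict stability of $\Phi$: this makes $\gamma\Phi$ strictly stable (so that $\sum_j\gamma^j\Phi^j=(I-\gamma\Phi)^{-1}$) and $\sqrt{\gamma}\,\Phi$ strictly stable (so that $\sum_j\gamma^j(\Phi^j)^TC^TC\Phi^j$ converges), while $\hat X_i$ is bounded and hence $\sum_i\gamma^i\hat X_i$ converges trivially.

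First I would treat the covariance part. Setting $S:=\sum_{i=N}^\infty\gamma^i\hat X_i$ and using the one-step recursion $\hat X_{i+1}=W+\Phi\hat X_i\Phi^T$ implied by \eqref{eqn:covairance matrix of state:expression of covariance matrix in term of sums involving W}, a shift of summation index gives $\gamma\Phi S\Phi^T = S-\gamma^N\hat X_N-\tfrac{\gamma^{N+1}}{1-\gamma}W$, i.e.\ $S$ satisfies \eqref{eq:Slyap}. Because $\sqrt{\gamma}\,\Phi$ is strictly stable the solution of \eqref{eq:Slyap} is unique, so $\widetilde S=S$ and hence $\sum_{i=N}^\infty\gamma^i\tr(C^TC\hat X_i)=\tr(C^TC\widetilde S)$.

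Next I would treat the nominal-state part. Substituting $\bar x_{i|k}=\Phi^{i-N}(\bar x_{N|k}-x^r)+x^r$ from \eqref{eqn:nominal state:dynamics describing the evolution of nominal state after time N}, writing $z:=\bar x_{N|k}-x^r$, and reindexing with $j=i-N$, one expands $\norm{C(\Phi^j z+x^r)}^2$ into a quadratic term $z^T(\Phi^j)^TC^TC\Phi^j z$, a cross term $2(x^r)^TC^TC\Phi^j z$, and a constant term $\norm{x^r}^2_{C^TC}$. Summing the three resulting geometric-type series yields $\sum_{i=N}^\infty\gamma^i\norm{C\bar x_{i|k}}^2 = \gamma^N\bigl[\norm{z}^2_{\widetilde P}+2(x^r)^TC^TC(I-\gamma\Phi)^{-1}z+\tfrac{1}{1-\gamma}\norm{x^r}^2_{C^TC}\bigr]$, where $\widetilde P:=\sum_{j=0}^\infty\gamma^j(\Phi^j)^TC^TC\Phi^j$ is checked to solve \eqref{eq:Plyap} (uniquely) by the same index-shift argument and $(I-\gamma\Phi)^{-1}=\sum_{j=0}^\infty\gamma^j\Phi^j$. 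Dividing by $t^2$ and adding the covariance contribution $\tr(C^TC\widetilde S)/t^2$ gives exactly \eqref{eq:term_constr}.

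I do not expect a genuine obstacle; the argument is essentially bookkeeping with geometric series and Lyapunov recursions. The one place to be careful is the cross term, where one must retain $(I-\gamma\Phi)^{-1}$ rather than a $\widetilde P$-type object and keep track of the single factor $\gamma^N$ that multiplies the whole bracket after reindexing; one should also state explicitly that identifying $\widetilde S$ and $\widetilde P$ with the two series uses uniqueness of the solutions of \eqref{eq:Slyap} and \eqref{eq:Plyap}, which in turn relies on strict stability of $\sqrt{\gamma}\,\Phi$.
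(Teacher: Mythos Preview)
Your proposal is correct and follows essentially the same approach as the paper: both split the series \eqref{eq:fdef} into a covariance part and a nominal-state part, identify the covariance sum with $\widetilde S$ via the recursion $\hat X_{i+1}=\Phi\hat X_i\Phi^T+W$, and evaluate the nominal-state sum by expanding $\|C(\Phi^{i-N}z+x^r)\|^2$ into quadratic, cross and constant terms summed via $\widetilde P$, $(I-\gamma\Phi)^{-1}$ and $1/(1-\gamma)$ respectively. Your write-up is slightly more explicit than the paper's in invoking uniqueness of the Lyapunov solutions and in justifying convergence through strict stability of $\sqrt{\gamma}\,\Phi$, but the route is the same.
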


\begin{proof}
  Writing $\|C\bar{x}_{i|k}\|^2 = \|C(\bar{x}_{i|k} - x^r) + Cx^r\|^2$ and using~\eqref{eqn:nominal state:dynamics describing the evolution of nominal state after time N}, we obtain
  \begin{align*}
    \norm{C\bar{x}_{i|k}}^2 & = \norm{C\Phi^{i-N}(\bar{x}_{N|k} - x^r)}^2 \\
    &\quad+ 2 (x^r)^T C^T C \Phi^{i-N}(\bar{x}_{N|k}-x^r) + \norm{Cx^r}^2 
  \end{align*}
for all $i\geq N$,
  and since $\smash{\widetilde{P}}$ satisfies \eqref{eq:Plyap}, we have
  \begin{IEEEeqnarray}{C}\label{eq:fproof1}
    \sum_{i=N}^\infty \frac{\gamma^i}{t^2} \norm{C\bar{x}_{i|k}}^2 \! = \! 
    \frac{\gamma^N}{t^2} \norm{\bar{x}_{N|k}-x^r}^2_{\widetilde{P}} 
    +\frac{\gamma^N}{(1-\gamma)} \frac{\norm{x^r}^2_{C^TC}}{t^2}
    \nonumber \\
    +\frac{2\gamma^N (x^r)^TC^TC(I-\gamma \Phi)^{-1}(\bar{x}_{N|k}-x^r)}{t^2}.
  \end{IEEEeqnarray}
  Furthermore, if $\widetilde{S} = \sum_{i=N}^\infty \gamma^i \hat{X}_i$, then $\widetilde{S}$ is the solution of the Lyapunov equation~\eqref{eq:Slyap}  since~\eqref{eqn:covairance matrix of state:expression of covariance matrix in term of sums involving W} implies
  \begin{align*}
    \gamma \Phi \widetilde{S} \Phi^T &= \sum_{i=N}^\infty \gamma^{i+1} \Phi \hat{X}_i \Phi^T = \sum_{i=N}^\infty \gamma^{i+1}(\hat{X}_{i+1} - W) \\
                                     &= \widetilde{S} - \gamma^N\hat{X}_N - \frac{\gamma^{N+1}}{1-\gamma} W ,
  \end{align*}
  and it follows that
  \begin{equation}\label{eq:fproof2}
    \sum_{i=N}^\infty \frac{\gamma^i}{t^2} \tr\bigl(C^TC\hat{X}_i\bigr) = \frac{\tr\bigl(C^TC\widetilde{S}\bigr)}{t^2} .
\end{equation}
Combining~\eqref{eq:fproof1} and~\eqref{eq:fproof2}, it is clear that~\eqref{eq:fdef} is equivalent to~\eqref{eq:term_constr} if $\widetilde{P}$ and $\widetilde{S}$ are defined by~\eqref{eq:Plyap} and~\eqref{eq:Slyap}.
\end{proof}

The following result gives the relationship between  $\varepsilon_k$ and the expected value of $\varepsilon_{k+1}$ for the closed loop system.

\begin{theorem}\label{thm:epsilon}
  If $\varepsilon_k$ is defined by~\eqref{eqn:epsilon of next time instant:expression of new epsilon using results obtained at current time} at all times $k\geq1$, then   in closed loop operation under Algorithm~\ref{algorithm:SPMC algorithm} we have \begin{equation}\label{eq:epsilon}
    \gamma \Ex[k]{\varepsilon_{k+1}} \leq \varepsilon_k - \frac{\norm{Cx_k}^2}{t^2}
  \end{equation}
  for all $k\geq 0$.
\end{theorem}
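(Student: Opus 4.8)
The plan is to show that, once Lemma~\ref{lemma: expression for terminal term in constraint} is used to unfold the terminal term, the explicit formula~\eqref{eqn:epsilon of next time instant:expression of new epsilon using results obtained at current time} for $\varepsilon_{k+1}$ collapses into a single infinite discounted sum whose conditional expectation can be computed term by term, after which the required inequality follows from feasibility of the MPC optimisation at time $k$.

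First I would rewrite $\varepsilon_{k+1}$. Applying~\eqref{eqn:epsilon of next time instant:expression of new epsilon using results obtained at current time} at time $k+1$ and invoking Lemma~\ref{lemma: expression for terminal term in constraint} to express $f(\bar{x}^\ast_{N+1|k}+\Phi^N\omega_k)$ via~\eqref{eq:fdef}, I would observe — as in the proof of Theorem~\ref{theorem:stability result}, where $\bar{x}_{i|k+1}=\bar{x}^\ast_{i+1|k}+\Phi^i\omega_k$ is verified for $i=0,\ldots,N$, and extending to $i\geq N$ using the second-mode dynamics~\eqref{eqn:nominal state:dynamics describing the evolution of nominal state after time N} together with $\bar{x}^\ast_{N+1|k}-x^r=\Phi(\bar{x}^\ast_{N|k}-x^r)$ and $\Phi^{i-N}\Phi^N=\Phi^i$ — that in fact $\bar{x}_{i|k+1}=\bar{x}^\ast_{i+1|k}+\Phi^i\omega_k$ for \emph{all} $i\geq 0$. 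Substituting this into both the finite sum and the unfolded terminal term of~\eqref{eqn:epsilon of next time instant:expression of new epsilon using results obtained at current time} glues them into
\[
\varepsilon_{k+1}=\sum_{i=0}^{\infty}\gamma^i\frac{\tr(C^TC\hat{X}_i)+\norm{C(\bar{x}^\ast_{i+1|k}+\Phi^i\omega_k)}^2}{t^2},
\]
a series that converges absolutely since $\Phi$ is strictly stable and $\gamma\in(0,1)$.

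Next I would take $\Ex[k]{\cdot}$. Since $\omega_k$ is zero-mean and independent of the information available at time $k$ (which determines $\bar{x}^\ast_{i+1|k}$), the cross terms vanish and $\Ex[k]{\norm{C\Phi^i\omega_k}^2}=\tr\!\bigl(C^TC\Phi^iW(\Phi^i)^T\bigr)$. Combining this with the recursion $\hat{X}_{i+1}=\hat{X}_i+\Phi^iW(\Phi^i)^T$ implied by~\eqref{eqn:covairance matrix of state:expression of covariance matrix in term of sums involving W} gives $\Ex[k]{\varepsilon_{k+1}}=\sum_{i=0}^{\infty}\gamma^i\bigl[\tr(C^TC\hat{X}_{i+1})+\norm{C\bar{x}^\ast_{i+1|k}}^2\bigr]/t^2$. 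Multiplying by $\gamma$, re-indexing with $j=i+1$, and peeling off the $j=0$ term (for which $\hat{X}_0=0$ and $\bar{x}^\ast_{0|k}=x_k$) yields
\[
\gamma\Ex[k]{\varepsilon_{k+1}}=\sum_{j=0}^{\infty}\gamma^{j}\frac{\tr(C^TC\hat{X}_{j})+\norm{C\bar{x}^\ast_{j|k}}^2}{t^2}-\frac{\norm{Cx_k}^2}{t^2}.
\]

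Finally I would bound the remaining series by $\varepsilon_k$: feasibility of the MPC optimisation at time $k$ (guaranteed by initial feasibility together with Lemma~\ref{theorem:theorem for recursive feasibility}) means $\{m^\ast_{i|k}\}_{i=0}^{N-1}$ admits scalars $\{\beta_{i|k}\}_{i=0}^{N-1}$ satisfying~\eqref{eqn:constraint:new form,generated by chebyshev two-sided inequality}--\eqref{eqn:constraint:new form,change infinite to finite horizon}; adding these constraints and rewriting $f(\bar{x}^\ast_{N|k})$ through Lemma~\ref{lemma: expression for terminal term in constraint} gives exactly $\sum_{j=0}^{\infty}\gamma^{j}\bigl[\tr(C^TC\hat{X}_{j})+\norm{C\bar{x}^\ast_{j|k}}^2\bigr]/t^2\leq\varepsilon_k$, which finishes the proof. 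The only delicate step is the first one — confirming that the second-mode tail of the shifted trajectory aligns with $\bar{x}^\ast_{i+1|k}+\Phi^i\omega_k$ so that the finite sum and the terminal term of~\eqref{eqn:epsilon of next time instant:expression of new epsilon using results obtained at current time} merge into one infinite series; everything afterwards is routine manipulation of geometric sums and the first two moments of $\omega_k$.
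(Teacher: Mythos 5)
Your proposal is correct and follows essentially the same route as the paper: unfold the terminal term via Lemma~\ref{lemma: expression for terminal term in constraint} into a single infinite discounted sum, take conditional expectations using the second moment of $\omega_k$, and bound the resulting series by $\varepsilon_k$ via feasibility at time $k$. The only cosmetic difference is that you absorb $\tr(C^TC\Phi^iW(\Phi^i)^T)$ into $\hat{X}_{i+1}$ and re-index, whereas the paper keeps this term separate and shows the residual sum telescopes to zero --- the same algebra organised slightly differently.
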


\begin{proof}
  Evaluating $\varepsilon_{k+1}$ using~\eqref{eqn:epsilon of next time instant:expression of new epsilon using results obtained at current time} and~(\ref{eq:fdef}) gives
  \[
    \varepsilon_{k+1} = \sum_{i=0}^\infty \gamma^i \frac{\tr\bigl(C^TC \hat{X}_i\bigr) + \bigl\|C \bigl(\bar{x}_{i+1|k}^\ast + \Phi^i \omega_k\bigr)\bigr\|^2}{t^2} ,
  \]
  where $\bar{x}^\ast_{i|k}$ is given by~\eqref{eqn:open loop system: optimal state trajectory, first mode}-\eqref{eqn:open loop system: optimal state trajectory, second mode} and $\omega_k$ is the realisation of the disturbance at time $k$. Taking expectations conditioned on information available at time $k$, this implies
  \begin{multline*}
    \gamma \Ex[k]{\varepsilon_{k+1}} = \sum_{i=0}^\infty \gamma^{i+1} 
    \frac{\tr\bigl(C^TC \hat{X}_i\bigr) + \bigl\| C\bar{x}^\ast_{i+1|k}\bigr\|^2 }{t^2} \\
    + \sum_{i=0}^\infty
    \gamma^{i+1}\frac{\tr\bigl( C^T
      C \Phi^i W \left( \Phi^i \right)^T\bigr)}{t^2} ,
  \end{multline*}
  but feasibility of the sequence $\{m_{i|k}^\ast\}_{i=0}^{N-1}$ at time $k$ implies
  $\sum_{i=0}^\infty \frac{\gamma^{i} }{t^2}
    \bigl[\tr(C^TC \hat{X}_i) + \| C\bar{x}^\ast_{i|k} \|^2 \bigr]
    \leq \varepsilon_k$
  and therefore
  \begin{multline*}
    \gamma\Ex[k]{\varepsilon_{k+1}} \leq \varepsilon_k - \frac{\norm{Cx_k}^2}{t^2} \\
    + \sum_{i=0}^\infty \frac{\gamma^i}{t^2}
    \tr\bigl[ C^TC \bigl( \gamma \Phi^i W {\Phi^i}^T +  (\gamma - 1) \hat{X}_i \bigr)\bigr] .
  \end{multline*}
To complete the proof we note that the sum on the RHS of this inequality is zero since
  \[
    \gamma^{i+1} \Phi^i W {\Phi^i}^T +  (\gamma^{i+1} - \gamma^i) \hat{X}_i = \gamma^{i+1} \hat{X}_{i+1} - \gamma^i \hat{X}_i,
  \]
and because $\hat{X}_0=0$ and $\lim_{i\to\infty} \gamma^i \hat{X}_i = 0$.
\end{proof}

The main result of this section is given next.
\begin{theorem} \label{theorem:closedloop_cc}
  The closed loop system under Algorithm~\ref{algorithm:SPMC algorithm} satisfies the chance constraint \eqref{eqn:constraint:probability constraint of the original form} if $\varepsilon_0=e$. 
\end{theorem}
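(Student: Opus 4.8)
The plan is to exploit the recurrence established in Theorem~\ref{thm:epsilon}, namely $\gamma\Ex[k]{\varepsilon_{k+1}} \leq \varepsilon_k - \norm{Cx_k}^2/t^2$, and iterate it to produce a bound on the discounted sum of violation probabilities. First I would take expectations given information at time $0$ and use the tower property $\Ex[0]{\Ex[k]{\varepsilon_{k+1}}} = \Ex[0]{\varepsilon_{k+1}}$ to obtain $\gamma\Ex{\varepsilon_{k+1}} \leq \Ex{\varepsilon_k} - \Ex{\norm{Cx_k}^2}/t^2$ for all $k\geq 0$. Multiplying the $k$-th such inequality by $\gamma^{k}$ gives $\gamma^{k+1}\Ex{\varepsilon_{k+1}} \leq \gamma^{k}\Ex{\varepsilon_k} - \gamma^{k}\Ex{\norm{Cx_k}^2}/t^2$, so the sequence $\gamma^{k}\Ex{\varepsilon_k}$ is non-increasing and telescopes: summing over $k=0,\ldots,T-1$ yields
\begin{equation*}
  \sum_{k=0}^{T-1} \gamma^{k} \frac{\Ex{\norm{Cx_k}^2}}{t^2} \leq \varepsilon_0 - \gamma^{T}\Ex{\varepsilon_T} \leq \varepsilon_0 = e ,
\end{equation*}
using $\varepsilon_T \geq 0$ (which holds because $\varepsilon_T$ is, by its definition~\eqref{eqn:epsilon of next time instant:expression of new epsilon using results obtained at current time} and~\eqref{eq:fdef}, a sum of manifestly non-negative terms). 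Letting $T\to\infty$ gives $\sum_{k=0}^{\infty} \gamma^{k}\Ex{\norm{Cx_k}^2} \leq e\,t^2$.

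The remaining step is to connect $\Ex{\norm{Cx_k}^2}$ to the violation probability $\mathbb{P}(\norm{Cx_k}\geq t)$. Here I would invoke Markov's inequality applied to the non-negative random variable $\norm{Cx_k}^2$: $\mathbb{P}(\norm{Cx_k}\geq t) = \mathbb{P}(\norm{Cx_k}^2 \geq t^2) \leq \Ex{\norm{Cx_k}^2}/t^2$. Combining this with the discounted sum bound above gives
\begin{equation*}
  \sum_{k=0}^{\infty} \gamma^{k}\, \mathbb{P}\bigl(\norm{Cx_k}\geq t\bigr) \leq \sum_{k=0}^{\infty} \gamma^{k}\frac{\Ex{\norm{Cx_k}^2}}{t^2} \leq e ,
\end{equation*}
which is exactly~\eqref{eqn:constraint:probability constraint of the original form}.

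I do not expect a serious obstacle here; the argument is essentially a telescoping sum plus Markov's inequality, and all the heavy lifting (the construction of $f$, the feasibility-preserving tail sequence, and the one-step $\varepsilon$-recurrence) has already been done in Lemma~\ref{lemma: expression for terminal term in constraint} and Theorem~\ref{thm:epsilon}. The one point requiring a little care is the interchange of limit and expectation / the justification that $\gamma^{T}\Ex{\varepsilon_T}$ has a well-defined (non-negative) value and can simply be dropped — this follows from non-negativity of $\varepsilon_T$ together with monotone convergence for the partial sums of non-negative terms, so no integrability assumption beyond the finiteness already implicit in feasibility is needed. It is also worth remarking that the same chain of inequalities, read at a general starting time, shows the closed-loop discounted violation sum from time $k$ onward is bounded by $\varepsilon_k$, which is the natural interpretation of the time-varying constraint level.
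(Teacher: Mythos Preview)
Your proposal is correct and follows essentially the same route as the paper: both multiply the one-step recurrence of Theorem~\ref{thm:epsilon} by $\gamma^k$, telescope, drop the non-negative tail term $\gamma^T\Ex{\varepsilon_T}$, and then bound the violation probability by $\Ex{\norm{Cx_k}^2}/t^2$. The only cosmetic differences are that the paper carries out the argument conditioned at a general time $k$ (yielding the bound $\sum_{i\geq 0}\gamma^i\mathbb{P}(\norm{Cx_{k+i}}\geq t)\leq\varepsilon_k$, which you note at the end) and calls the final step ``Chebyshev's inequality'' rather than Markov's inequality applied to $\norm{Cx_k}^2$; these are the same inequality here.
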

\begin{proof}
  Theorem~\ref{thm:epsilon} implies that the closed loop evolution of $\varepsilon_k$ satisfies
  \[
    \gamma^{i+1} \Ex[k]{\varepsilon_{k+i+1}} \leq \gamma^{i} \Ex[k]{\varepsilon_{k+i}} - \frac{\gamma^{i}}{t^2} \Ex[k]{\| C x_{k+i} \|^2}
  \]
  for all non-negative integers $k,i$. Summing both sides of this equation over $i\in\{0,1,\ldots\}$ gives
  \begin{equation} \label{eqn: Markov inequality bound}
  \varepsilon_k \geq \sum_{i=0}^\infty \gamma^i\frac{\Ex[k]{ \| C x_{k+i} \|^2}}{t^2}
    + \lim_{i\to\infty} \gamma^i \Ex[k]{\varepsilon_{k+i}} .
  \end{equation}
  But $\gamma^i \Ex[k]{\varepsilon_{k+i}}$ is necessarily non-negative for all $k,i\geq 0$, so by Chebyshev's inequality this implies   
  \begin{equation}\label{eq:closedloop_cc}
    \sum_{i=0}^\infty \gamma^i \mathbb{P}\bigl( \norm{C x_{k+i}} \geq t \bigr) \leq \varepsilon_k
  \end{equation}
  for all $k\geq 0$. An obvious consequence of the bound \eqref{eq:closedloop_cc} is that
the closed loop system will satisfy the chance constraint~\eqref{eqn:constraint:probability constraint of the original form}
if $\varepsilon_0$ is chosen to be equal to $e$.
\end{proof}
\vspace{2mm}

%
%

The presence of the factor $\gamma\in(0,1)$ on the LHS of~\eqref{eq:epsilon} implies that the expected value of $\varepsilon_k$ can increase as well as decrease along closed loop system trajectories. In fact, for values of $\gamma$ close to zero, a rapid initial growth in $\varepsilon_k$ is to be expected, which is in agreement with the interpretation that the constraint~\eqref{eqn:constraint:probability constraint of the original form} penalises violation probabilities more heavily at times closer to the initial time in this case. On the other hand, for values of $\gamma$ close to 1, $\varepsilon_k$ can be expected to decrease initially, implying a greater emphasis on the expected number of violations over some initial horizon.

\section{NUMERICAL EXAMPLE} \label{section:numerical example}
This section describes a numerical example illustrating the quadratic stability and  constraint satisfaction of the closed loop system \eqref{eqn:closed loop system dynamics, using the first element of optimal input sequence} under Algorithm~\ref{algorithm:SPMC algorithm}. Consider a system
with
\begin{IEEEeqnarray*}{C}
{\footnotesize 
A =
\begin{bmatrix}
1  & 2 \\
1.5 & 0.5
\end{bmatrix},
\quad B=
\begin{bmatrix}
1.2\\
1.5
\end{bmatrix}},
\end{IEEEeqnarray*}
and Gaussian disturbance $\omega_k\sim\mathcal{N}(0,W)$ with covariance matrix $W = 0.2 I_{2\times 2}$.
The constraint~\eqref{eqn:constraint:probability constraint of the original form} is defined by
$\gamma=0.9,\,t=1,\,e=3.5,\,C=
\begin{bmatrix}
0.6  & 
0.52
\end{bmatrix},
$
and the weighting matrices in the cost \eqref{eqn:cost:original form of the cost function} are given by
{\footnotesize \[ 
Q= C^TC =
\begin{bmatrix}
    0.3600&    0.3120\\
    0.3120&    0.2704
\end{bmatrix},
\quad R=1 .
\]}
Input and state references are $u^r=-0.6$, $x^r= (0.72, 0.36)$,
and the prediction horizon is chosen as $N=7$. The feedback gain is chosen as
$K=[{-0.92} \ {-0.85}]$ for the cost~\eqref{eqn:cost:original form of the cost function}, and matrices $P$, $\widetilde{P}$ and $\widetilde{S}$ are chosen to satisfy~\eqref{eqn:discrete time Lyapunov equation for compute penalty matrix in terminal cost}, \eqref{eq:Plyap} and \eqref{eq:Slyap}. The initial value for $\varepsilon_k$ is $\varepsilon_0=e=3.5$. 

Two sets of simulations (\textit{A} and \textit{B}) demonstrate the closed loop stability result in Theorem~\ref{theorem:stability result} and the constraint satisfaction result in Theorem~\ref{theorem:closedloop_cc}, respectively.

\textit{Simulation A}:\
To estimate empirically the average cost, 
$$
  \bar{J} :=
\lim_{T\to\infty}\frac{1}{T}\sum_{k=0}^{T-1}\mathbb{E}\Bigl[\norm{{x}_{k}-{x}^r}^2_Q+\norm{u_{k}-u^r}^2_R\Bigr],
$$
we consider the mean value of the stage cost over $100$ simulations. Each simulation has a randomly selected initial condition ($x_0\sim\mathcal{N}(0,I)$, with infeasible values discarded), and a length of $T = 500$ time-steps.
This gives the estimated average cost as $\bar{J} \approx 0.5036$, which is no greater than $\tr(WP)=0.5304$, and hence agrees with the bound~\eqref{eqn:quadratic stability result}. Moreover, the estimate of $\bar{J}$ decreases considerably more slowly as the simulation length $T$ increases.

\textit{Simulation B}:\ 
To test numerically whether the chance constraint \eqref{eqn:constraint:probability constraint of the original form} is satisfied, we estimate the discounted sum of violation probabilities on the LHS of \eqref{eqn:constraint:probability constraint of the original form},
$$
  V \!:= \sum_{k=0}^\infty \gamma^k \mathbb{P}(\|Cx_k \|\!\geq t ),
$$
by counting the number of violations at $k\in\{0,\ldots,T-1\}$, for $10^3$ simulations with $x_0=(-1.1130, 1.1156)^T$ and $T=100$. This gives $V\approx 0.8328$, which is less than $e = 3.5$ and hence satisfies the constraint \eqref{eqn:constraint:probability constraint of the original form}. For this example we have $\gamma^{100}\approx10^{-5}$, so increasing $T$ beyond $100$ time-steps has negligible effect on the estimate of $V$. Therefore the discrepancy between $e$ and the estimated value of $V$ can be attributed to the conservativeness of Chebyshev's inequality. In addition, if the unconstrained LQ-optimal feedback law $u_k=K_{LQ}(x_k-x^r)+u^r$ were employed, the value of the bound $\sum_{k=0}^\infty \gamma^k\Ex[k]{ \| C x_{k} \|^2}/t^2$ in \eqref{eqn: Markov inequality bound} would be $4.6998$, which exceeds $e$. 
Hence this control law may not satisfy~\eqref{eqn:constraint:probability constraint of the original form} and is worse than the MPC law \eqref{eqn:closed loop control law} in terms of this bound.

\section{CONCLUSIONS}\label{section:conclusion}

A stochastic MPC algorithm that imposes constraints on the sum of discounted future constraint violation probabilities can ensure recursive feasibility of the online optimisation and closed loop constraint satisfaction. Key features are the design of a constraint-tightening procedure and closed loop analysis of the tightening parameters. The MPC algorithm requires knowledge of the first and second moments of the disturbance, and is implemented as a convex QCQP problem. 
%






\bibliography{Paper_draft}
\bibliographystyle{ieeetr}

\end{document}